\newtheorem{theorem}{Theorem}
\newtheorem{corollary}{Corollary}
\newtheorem{lemma}{Lemma}
\newtheorem{proposition}{Proposition}
\def\Iagg{I_\text{agg}}
\def\mIagg{\mu_{{\text{I}}_\text{agg}}}
\def\E{\mathbb{E}}
\def\plos{\mathcal{P}_\text{LoS}}
\def\pnlos{\mathcal{P}_\text{NLoS}}
\def\Pt{\text{P}_\text{t}}
\def\Lf{\text{L}_\text{f}}
\def\slos{\Psi_\text{LoS}}
\def\snlos{\Psi_\text{NLoS}}
\def\mlos{\mu_\text{LoS}}
\def\vlos{\sigma^2_\text{LoS}}
\def\mnlos{\mu_\text{NLoS}}
\def\vnlos{\sigma^2_\text{NLoS}}
\def\rc{\text{R}_\mathcal{C}}
\def\fA{\varphi_\text{A}}
\def\af{\text{K}_\text{f}}
\def\pcov{\mathcal{P}_\text{cov}}
\def\pri{\text{P}_{\text{r},i}}
\def\pr{\text{P}_\text{r}}
\def\Ac{\mathcal{A}_\text{c}}
\def\A{\mathcal{A}}
\def\pri{\text{P}_{\text{r},i}}
\def\pr{\text{P}_\text{r}}
\def\vc{\varphi_\text{c}}
\def\sinr{\mathsf{SINR}}
\def\mxi{\mu_\xi}
\def\vxi{\sigma^2_\xi}
\begin{document}
\title{Coverage Maximization for a Poisson Field of Drone Cells}

\author{Mohammad Mahdi Azari$^{1}$, Yuri Murillo$^{1}$, Osama Amin$^{2}$, Fernando Rosas$^{3,4}$ \\ \vspace{2mm}  Mohamed-Slim Alouini$^{2}$, Sofie Pollin$^{1}$ \\ 
$^1$Department of Electrical Engineering, KU Leuven, Belgium \\ 
$^2$CEMSE Division, King Abdullah University of Science and Technology, Saudi Arabia \\ $^{3}$ Centre of Complexity Science and Department of Mathematics, Imperial College London, UK \\
$^{4}$ Department of Electrical and Electronic Engineering, Imperial College London, UK \vspace{2mm} \\
 Email: mahdi.azari@kuleuven.be}
\maketitle

\begin{abstract}
The use of drone base stations to provide wireless connectivity for ground terminals is becoming a promising part of future technologies. The design of such aerial networks is however different compared to cellular 2D networks, as antennas from the drones are looking down, and the channel model becomes height-dependent. In this paper, we study the effect of antenna patterns and height-dependent shadowing. We consider a random network topology to capture the effect of dynamic changes of the flying base stations. First we characterize the aggregate interference imposed by the co-channel neighboring drones. Then we derive the link coverage probability between a ground user and its associated drone base station. The result is used to obtain the optimum system parameters in terms of drones antenna beamwidth, density and altitude. We also derive the average LoS probability of the associated drone and show that it is a good approximation and simplification of the coverage probability in low altitudes up to 500 m according to the required signal-to-interference-plus-noise ratio ($\sinr$).
\end{abstract}
\vspace{3mm}
\begin{IEEEkeywords}
Drone base station, air-to-ground communication, line-of-sight probability, coverage probability, aggregate interference, Poisson point process (PPP)
\end{IEEEkeywords}


\section{Introduction}

The demand of efficient ubiquitous high-speed communication networks is an essential requirement for future communications. Deploying new wireless networks faces different challenges such as spectrum scarcity, fixed location and the need to set up new infrastructure \cite{agiwal2016next}. Another challenge is providing wireless connectivity to ground terminals when the existing terrestrial networks fail to operate or satisfy the demand of wireless connections \cite{miranda2016survey}. The use of drone base stations is an alternative future technology that can provide wireless connectivity for ground users. Furthermore, the swiftness of drones deployments in an independent fashion of the legacy infrastructure highlights the privilege of running such systems. To this end, NASA is prototyping a drones traffic management technology to facilitate the use of multiple drones in a range of altitudes \cite{kopardekar2014unmanned}.


Recently, people from research and academia highlighted the important performance benefits achieved by the use of drones as base stations.
The drone base stations can deploy a flexible scalable network with robust links due to the high line-of-sight (LoS) probability between the drone and a ground terminal. The existing research analyzed the performance gain achieved from this on-the-fly solution assuming single drones and multiple drones \cite{azari2016optimal,mozaffari2017mobile,azari2016gcw,azari2017ultra, mozaffari2016efficient,hayajneh2016drone}. In the single drone research studies, the altitude of a drone is optimized to balance between power and coverage requirements in \cite{azari2016gcw,azari2017ultra}. 
The performance of multiple drone base stations is analyzed in \cite{mozaffari2016efficient,hayajneh2016drone}, where the  effect of interference from neighboring drone base stations is analyzed. In \cite{mozaffari2016efficient}, multiple drone base stations are considered in a predefined topology to provide coverage for the ground terminals in downlink scenario, however only the interference coming from the nearest drone is taken into account. In \cite{hayajneh2016drone}, the coverage probability for a fixed number of drones is analyzed without considering the effect of randomness in the number of drones, drones antenna pattern and the height-dependent shadowing. 

In fact, most of the existing literature focus on the fixed network topology of drone base stations meaning that the number and location of the available drones are known a priori and remain unchanged. However, this snapshot is not always the same in practice due to the intrinsic mobile nature of the drones. Specifically, the continuing dynamic nature of this on-the-fly network results from different aspects such as smart design \cite{fotouhi2016dynamic,fotouhi2017dynamic}, limited flying lifetime \cite{chandrasekharan2016designing} and traffic management \cite{kopardekar2014unmanned}. The drone network can be smartly designed to benefit from the drones mobility by tunning  their positions according to the quality of service requirements \cite{fotouhi2016dynamic,fotouhi2017dynamic}. As for the energy consumption, it depends on the drones trajectory of movement, the duration of communication, payload weight and the battery size, which allow a limited range of lifetime \cite{chandrasekharan2016designing}. Therefore, the lifetime of a drone is random and compensation drones may be needed to support the service requirement of ground terminals. Taking these facts into account, the location and number of drone base stations are likely to change to allow other drones maneuver~\cite{kopardekar2014unmanned}.

In this paper, we study the coverage probability performance of multiple drone base stations while considering flexible dynamic changes of this scalable on-the-fly network. To this end, and without a priori knowledge regarding the exact number and location of the available drones, we model the distribution of the drones by a Poisson point process (PPP). Moreover, we assume that the drones employ a directional antenna to reduce the aggregate interference and concentrate on the target regions. By taking into account the height-dependent shadowing effect, we characterize the aggregate interference and derive the link coverage probability which includes the impact of different system parameters such as drones antenna pattern. Then, we obtain an average LoS probability of the associated drone to a target ground terminal which gives us an insight into the existence of the optimum drones antenna beamwidth, density and altitude. 
We also show that the coverage probability is approximated well by the average LoS probability up to 500 m dependent on the signal-to-interference-plus-noise ratio ($\sinr$) requirement. This approximation significantly reduces the complexity of the coverage probability expression.  

The rest of this paper is organized as follows. In Section II the network model and drone association strategy is discussed. The network performance is analyzed in Section III. Section IV presents the numerical results. Finally, we conclude the paper in Section V.

\section{Network Model}

In the following we describe the system structure in Section \ref{sys_architecture}, the channel model in Section \ref{channel_model} and the drone association method in Section \ref{drone_asso}.

\begin{figure}  
\centering
  \begin{tikzpicture}
  \node[above right] (img) at (0,0)   {\includegraphics[width=\linewidth]{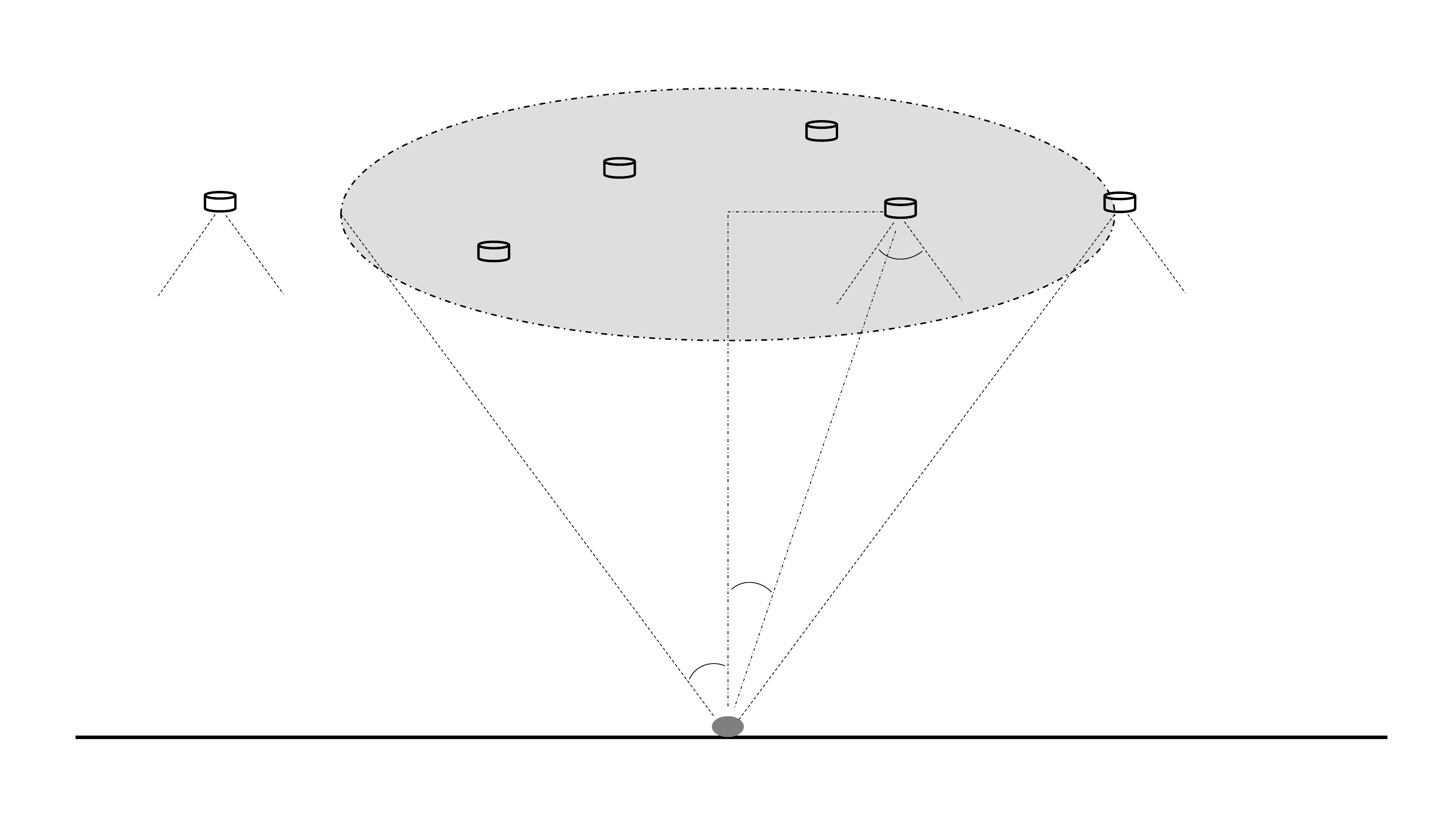}};
  \node at (170pt,110pt) {\footnotesize Drone};
  \node at (125pt,10pt) {\footnotesize UE};
  \node at (100pt,80pt) {\footnotesize $\A$};
  \node at (130pt,47pt) {\footnotesize $\varphi$};
  \node at (120pt,35pt) {\footnotesize $\frac{\fA}{2}$};
  \node at (125pt,110pt) {\footnotesize \text{O}};
  \node at (123pt,70pt) {\footnotesize $h$};
  \node at (140pt,100pt) {\footnotesize $r$};
  \node at (155pt,93pt) {\footnotesize $\fA$};
  \node at (147pt,70pt) {\footnotesize $d$};
  \end{tikzpicture}
   \caption{Downlink system between a drone and a ground UE. The connection is established with the closest drone and the rest conform a random field of interferers.} \label{sysmodel}
\end{figure}

\subsection{System Architecture} \label{sys_architecture}

We consider a downlink communication system where ground user equipments (UEs) are served by drones acting as aerial base stations and providing wireless connectivity. We assume that the drone base stations are distributed according to a PPP of a fixed density $\lambda$ and placed at a same altitude $h$. We also assume the drones employ directional antennas with the same beamwidth $\fA$ pointed towards the ground whereas the ground UEs are equipped with omni-directional antennas. From the geometry of the network, the drones that only placed within a circular surface $\A$ centered at $O$ above a UE will be able to reach this device within their main lobes. This region is seen by the UE with the angle of $\fA$ as illustrated in Figure \ref{sysmodel}. The distance from O to a drone and the link length between the UE and the drone are denoted respectively as $r$ and $d$, while $\varphi$ is the complement of the elevation angle that the drone makes with respect to the UE.

\subsection{Channel Model} \label{channel_model}

In order to model the wireless channel between a ground UE and a drone, the LoS and non-line-of-sight (NLoS) components are considered separately along with their probabilities of occurrence \cite{al2014modeling}. Following this model and considering that all drones transmit at the same power level $\Pt$, the received power at the UE from the LoS and NLoS components can be expressed as 
\begin{equation}\label{Pr}
\ {P}_\text{r} =
 \begin{cases} 
      \frac{\Pt}{\Lf\slos} & ;~ \text{for LoS} \\
      \frac{\Pt}{\Lf\snlos} & ;~ \text{for NLoS}
   \end{cases}
\end{equation}
where $\Lf$ represents the free-space path loss (FSPL) and $\slos$ and $\snlos$ account for the excessive path loss and shadowing effects. Let us express the FSPL as
\begin{equation}\label{Lf}
\Lf = \left(\frac{4\pi fd}{c}\right)^2 = \af~\frac{h^2}{\cos^2(\varphi)}
\end{equation}
with $f$ being the frequency of operation, $c$ the speed of light, $\af=(4\pi f/c)^2$ for convenience of notation and finally \mbox{$d = \frac{h}{\cos(\varphi)}$} from Figure \ref{sysmodel}.
The terms $\slos$ and $\snlos$ follow a log-normal distribution 
\begin{align} \label{psi_distr}
10\log_{10}\Psi_\xi \sim \mathcal{N}(\mxi,\vxi);~~~ \xi \in \{\text{LoS},\text{NLoS}\} 
\end{align}
where $\mlos$ and $\mnlos$ describe the excessive path loss and are constant values depending on $f$ and the propagation environment. As shown in \cite{al2014modeling}, $\sigma_\text{LoS}$ and $\sigma_\text{NLoS}$ describe the shadow fading in the links and can be expressed as
\begin{align}
\sigma_\xi = a_\xi \cdot e^{b_\xi \varphi};~~~ \xi \in \{\text{LoS},\text{NLoS}\}
\end{align}
with parameters $a_\text{LoS}$, $b_\text{LoS}$, $a_\text{NLoS}$ and $b_\text{NLoS}$ being frequency and environment dependent.

Moreover, the LoS probability is given by \cite{al2014modeling}
\begin{equation} \label{Plos}
\plos(\varphi) = \beta_1\left(\frac{5\pi}{12}-\varphi\right)^{\beta_2}
\end{equation}
where $\beta_1$ and $\beta_2$ are also frequency and environment dependent parameters and the NLoS probability is \mbox{$\pnlos = 1 - \plos$.} 

\subsection{Drone Association and Link $\sinr$} \label{drone_asso}

The same as a regular cellular networking, we assume that a UE connects to the closest drone base station \cite{elsawy2016modeling}, where its angle with respect to the UE is a random variable represented by $\Phi_\text{c}$. Therefore, the connection link between the UE and the associated drone is interfered by all the other drone base stations. However, based on a combination of two reasons the effect of interfering drones outside of $\A$ can be neglected. The first reason is that for a fixed altitude $h$, increasing $\varphi$ decreases the LoS probability \cite{al2014modeling} and also increases the link length $d$ which lead to a relatively higher path loss. On the other hand, a directional antenna has a much lower gain outside its main lobe. Therefore, the received power from the interfering drones beyond $\A$ is significantly lower compared to the ones within the region.

Using the described channel model and considering $\pri$ as the received power from the $i$th interferer within $\A$, the aggregate interference can be written as
\begin{equation} \label{Iagg}
\Iagg = \sum_{i \in \mathcal{I}} \pri,
\end{equation}
where $\mathcal{I}$ indicates the set of interferers within $\A$. In \eqref{Iagg} $\Iagg$ becomes a stochastic process due to the random nature of the contributions involved. Indeed, for every realization of the PPP the number, location and channel statistics of the interferers will be different. To the best of our knowledge, there is no closed-form expression for the probability density function (pdf) of $\Iagg$ for such system model described above. For this reason and in order to provide a tractable analysis for the network performance we characterize the aggregate interference $\Iagg$ with its mean value $\mIagg$. This will allow us to obtain closed-form expressions for the performance metrics of the considered system and investigate the impact of system parameters on the network performance. In order to do so, we are interested in the signal-to-interference-plus-noise ratio ($\sinr$) of the communication link between a UE and its associated drone which can be written as follows
\begin{equation}\label{SINR}
\mathsf{SINR} = 
 \begin{cases} 
      \frac{\Pt}{(\mIagg+N_0) \Lf\slos} & ;~ \text{for LoS} \\
      \frac{\Pt}{(\mIagg+N_0) \Lf\snlos} & ;~ \text{for NLoS}
   \end{cases}
\end{equation}
where $N_0$ is the noise power.

\section{Performance Analysis}

In this section we elaborate on the network performance by adopting coverage probability as the performance metric. The coverage probability $\pcov$ of the link between a ground terminal and its associated drone is defined as
\begin{equation} \label{pcov}
\pcov \triangleq \mathbb{P}[\mathsf{SINR}>\text{T}],
\end{equation}
where $\mathbb{P}[E]$ is the probability of the event $E$, T is an $\mathsf{SINR}$ threshold and $\sinr$ is expressed in \eqref{SINR}. To compute $\pcov$, first we propose the following lemma which returns the mean aggregate interference.
\begin{lemma} \label{mIagg-lemma}
Given the relative location of the closest drone to a UE at \mbox{$\Phi_\text{c} = \varphi_\text{c}$}, the mean aggregate interference $\mIagg(\varphi_\text{c})$ is obtained as
\begin{equation} \label{mIagg-final}
\mIagg(\varphi_\text{c}) = \frac{2\pi \lambda \Pt}{\af} \cdot \Upsilon_{\!\mathrm{I}}(\varphi_\text{c}),
\end{equation}
where
\begin{align} \nonumber
\Upsilon_{\!\mathrm{I}} &(\varphi_\text{c}) = \int_{\varphi_\text{c}}^{\frac{\fA}{2}} \tan(\varphi) \left[\beta_1\left(\frac{5\pi}{12}-\varphi\right)^{\beta_2} 10^{\frac{-\mlos+v\vlos(\varphi)/2}{10}} \right. \\ \label{mIaggfinal}
&\left. + \left(1-\beta_1\left(\frac{5\pi}{12}-\varphi\right)^{\beta_2}\right) 10^{\frac{-\mnlos+v\vnlos(\varphi)/2}{10}} \right] d\varphi,
\end{align}
and $v=\frac{\ln(10)}{10}$.
\end{lemma}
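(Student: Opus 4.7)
The plan is to apply Campbell's theorem to the PPP of interferers, replacing the random sum in \eqref{Iagg} by a deterministic integral over the annular region of $\A$ outside the serving drone's radius, and then reduce everything to an integral over the elevation angle $\varphi$.

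First I would describe the geometry of the interference region. Conditioned on the serving drone lying at angular position $\varphi_\text{c}$, the remaining interferers form a PPP of intensity $\lambda$ on the ground projection of $\A$ between the radii $r_\text{c} = h\tan(\varphi_\text{c})$ and $r_\A = h\tan(\fA/2)$. By Campbell's theorem,
\begin{equation*}
\mIagg(\varphi_\text{c}) = \lambda \int_{r_\text{c}}^{r_\A} 2\pi r \, \mathbb{E}\!\left[\text{P}_\text{r}(r)\right] dr,
\end{equation*}
where the expectation is taken over the LoS/NLoS state and the log-normal shadowing, with $\text{P}_\text{r}(r)$ given by \eqref{Pr}.

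Next I would evaluate $\mathbb{E}[\text{P}_\text{r}(r)]$ by conditioning on the LoS event. Writing $r = h\tan(\varphi)$ and using \eqref{Lf} gives $\Lf = \af h^2/\cos^2(\varphi)$. Using \eqref{psi_distr}, for $\xi\in\{\text{LoS},\text{NLoS}\}$ we have $\Psi_\xi = 10^{X_\xi/10}$ with $X_\xi\sim\mathcal{N}(\mu_\xi,\sigma_\xi^2)$, so $\Psi_\xi^{-1} = e^{-vX_\xi}$ with $v = \ln(10)/10$, and the moment generating function of the normal distribution yields
\begin{equation*}
\mathbb{E}\!\left[\Psi_\xi^{-1}\right] = e^{-v\mu_\xi + v^2\sigma_\xi^2/2} = 10^{(-\mu_\xi + v\sigma_\xi^2/2)/10}.
\end{equation*}
Combining these with the LoS probability \eqref{Plos} produces a closed-form expression for $\mathbb{E}[\text{P}_\text{r}(\varphi)]$ as $(\Pt\cos^2\varphi)/(\af h^2)$ times the bracketed expression inside \eqref{mIaggfinal}.

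Finally I would perform the substitution $r = h\tan\varphi$, $dr = h\sec^2\varphi \, d\varphi$ in the Campbell integral. The Jacobian $h\sec^2\varphi$, the area factor $2\pi h\tan\varphi$, and the $\cos^2\varphi/h^2$ coming from the inverse path loss combine cleanly: the three factors of $h$ and the $\sec^2\cos^2$ cancel, leaving the prefactor $2\pi\lambda\Pt/\af$ and an integrand proportional to $\tan(\varphi)$ times the LoS/NLoS-weighted shadowing term. This yields exactly \eqref{mIagg-final}–\eqref{mIaggfinal}. The main point requiring care is the log-normal expectation step, since a common pitfall is mishandling the base-10 vs.\ natural-log conversion in the variance term; aside from that, the argument is essentially a bookkeeping exercise once Campbell's theorem is invoked.
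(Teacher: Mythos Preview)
Your proposal is correct and follows essentially the same route as the paper: both reduce $\mIagg(\varphi_\text{c})$ to $\lambda$ times an area integral of the shadowing-averaged received power over the annulus $\varphi_\text{c}\le\varphi\le\fA/2$, compute the log-normal mean $\E[\Psi_\xi^{-1}]=10^{(-\mu_\xi+v\sigma_\xi^2/2)/10}$ in the same way, and change variables $r=h\tan\varphi$ to obtain \eqref{mIagg-final}. The only cosmetic difference is that you invoke Campbell's theorem directly, whereas the paper derives the same identity from first principles by conditioning on the Poisson count $N$ and using $\E[N]=\lambda|\Ac|$ together with the uniform-in-area distribution of each interferer.
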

\begin{proof}
The proof can be found in Appendix \ref{mIagg_proof}.
\end{proof}
Note that the mean aggregate interference $\mIagg(\vc)$ is independent from $h$. To elaborate this, we note that from PPP assumption the average number of interferers is equal to $\lambda \pi h^2 [\tan^2(\fA / 2)-\tan^2(\vc)]$, which increases proportional to $h^2$ as $h$ increases. On the other hand, using \eqref{Pr} and \eqref{Lf} the received power from each interferer decreases with the same rate. Therefore, the average cumulative interference remains the same at different altitudes. Moreover, the expression \eqref{mIagg-final} shows that $\mIagg(\vc)$ increases linearly with $\lambda$ due to the fact that the average number of interferers linearly increases with $\lambda$. Also $\mIagg(\vc)$ is higher for larger $\fA$ as the number of drones within $\A$ increases with $\fA$ for a fixed $h$ and $\lambda$.

Now using Lemma \ref{mIagg-lemma} the coverage probability can be found in the following theorem.

\begin{theorem} \label{pcov-theorem}
The coverage probability $\pcov$ of the communication link between a ground terminal and its associated drone can be expressed as
\begin{subequations} \label{pcovANDothers}
\begin{equation} \label{pcov-final}
\pcov(h,\fA,\lambda) = 2\pi \lambda h^2 \cdot \Upsilon_{\!\mathrm{cov}}(h,\fA,\lambda),
\end{equation}
where
\begin{align} \nonumber
\Upsilon_{\!\mathrm{cov}}(h,\fA,\lambda)& =\int_0^{\frac{\fA}{2}} \left[Q\left(\frac{\mlos-\psi(\varphi_\text{c})}{\sigma_\text{LoS}(\varphi_\text{c})}\right)\beta_1\left(\frac{5\pi}{12}-\vc \right)^{\beta_2} \right. \\ \nonumber
 & \left. \hspace{-10mm} + Q\left(\frac{\mnlos-\psi(\varphi_\text{c})}{\sigma_\text{NLoS}(\varphi_\text{c})}\right)\left(1-\beta_1\left(\frac{5\pi}{12}-\vc \right)^{\beta_2}\right)\right] \\ \label{pcov-integral}
& \times \frac{\sin(\varphi_\text{c})}{\cos^3(\varphi_\text{c})} e^{-\lambda \pi h^2 \tan^2(\varphi_\text{c})} ~d\varphi_\text{c}.
\end{align}
and
\begin{equation} \label{psi}
\psi(\varphi_\text{c}) = 10\log_{10}\left(\frac{\Pt}{[\mIagg(\vc) + N_0]\Lf(\varphi_\text{c})\text{T}}\right).
\end{equation}
\end{subequations}
\end{theorem}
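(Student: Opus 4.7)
The plan is to express $\pcov$ by conditioning on the angular position $\Phi_c$ of the associated drone and decomposing over the LoS/NLoS event, then to exploit (i) the nearest-neighbour distribution induced by the PPP and (ii) the log-normal form of the excess path loss in~\eqref{psi_distr}.

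First I would derive the probability density function of $\Phi_c$. Since the drones form a homogeneous PPP with intensity $\lambda$ at common altitude $h$, the \emph{horizontal} distance $R$ from $O$ to the closest drone obeys the usual PPP void probability $\mathbb{P}[R>r]=e^{-\lambda\pi r^2}$, so that $f_R(r)=2\pi\lambda r\,e^{-\lambda\pi r^2}$. Because $r=h\tan\varphi_c$ and $dr/d\varphi_c = h/\cos^2\varphi_c$, a standard change of variables gives
\begin{equation*}
f_{\Phi_c}(\varphi_c)=2\pi\lambda h^2\,\frac{\sin\varphi_c}{\cos^3\varphi_c}\,e^{-\lambda\pi h^2\tan^2\varphi_c},
\end{equation*}
which already explains both the prefactor $2\pi\lambda h^2$ in~\eqref{pcov-final} and the weight appearing in the last line of~\eqref{pcov-integral}. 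The integration domain $[0,\fA/2]$ then comes from restricting attention to the event that the closest drone actually lies inside $\A$, as prescribed by the association rule of Section~\ref{drone_asso}.

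Next, for a fixed value $\Phi_c=\varphi_c$ I would split the coverage event according to whether the link is LoS or NLoS, which occur with probabilities $\plos(\varphi_c)$ and $1-\plos(\varphi_c)$ given in~\eqref{Plos}. Using~\eqref{SINR} and Lemma~\ref{mIagg-lemma}, the condition $\sinr>\text{T}$ in, say, the LoS case becomes $\slos<\Pt/[(\mIagg(\varphi_c)+N_0)\Lf(\varphi_c)\text{T}]$, or equivalently, after applying $10\log_{10}(\cdot)$, $\slosdb<\psi(\varphi_c)$ with $\psi$ as in~\eqref{psi}. Since $\slosdb\sim\mathcal{N}(\mlos,\vlos)$ by~\eqref{psi_distr}, the Gaussian CDF yields $\mathbb{P}[\slosdb<\psi(\varphi_c)]=Q((\mlos-\psi(\varphi_c))/\sigma_\text{LoS}(\varphi_c))$, and analogously for NLoS. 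Combining the two sub-events through the total probability formula produces exactly the bracketed expression in~\eqref{pcov-integral}.

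Finally, I would assemble the pieces through one more application of total probability, integrating the conditional coverage probability against $f_{\Phi_c}(\varphi_c)$ over $[0,\fA/2]$; this delivers~\eqref{pcovANDothers} directly. The only genuinely delicate step is the first one: the change of variables from $R$ to $\Phi_c$ together with the justification that on the event $\{\Phi_c\le\fA/2\}$ the closest drone in $\mathbb{R}^2$ coincides with the associated drone inside $\A$, so that the void probability of the PPP indeed provides the correct weighting. Once this geometric identification is in place, the rest reduces to bookkeeping of the log-normal CDF and the LoS/NLoS decomposition, with $\mIagg(\varphi_c)$ treated as a deterministic quantity thanks to the mean-interference substitution already adopted in~\eqref{SINR}.
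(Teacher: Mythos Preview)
Your proposal is correct and follows essentially the same route as the paper: condition on $\Phi_c$, obtain $f_{\Phi_c}$ from the PPP void probability via the relation $r=h\tan\varphi_c$, split the conditional coverage into LoS/NLoS, reduce each to a Gaussian tail via~\eqref{psi_distr}, and integrate over $[0,\fA/2]$. The only cosmetic difference is that the paper computes $F_{\Phi_c}$ directly and then differentiates, whereas you first write down $f_R$ and then change variables; the two are equivalent.
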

\begin{proof}
The coverage probability in \eqref{pcov} can be written as
\begin{equation}\label{pcov-calculation}
\pcov(h,\fA,\lambda) = \int_0^{\frac{\fA}{2}} \mathbb{P}[\mathsf{SINR}>\text{T}|\Phi_\text{c} = \varphi_\text{c}] ~f_{\Phi_\text{c}}(\varphi_\text{c}) d\varphi_\text{c},
\end{equation}
where $f_{\Phi_\text{c}}(\varphi_\text{c})$ is the pdf of $\Phi_\text{c}$. Using the auxiliary random variable $R_\text{c}$ which represents the radius of the closest drone over $\A$ one finds
\begin{align} \nonumber
F_{\Phi_\text{c}}(\varphi_\text{c}) & \triangleq \mathbb{P} [\Phi_\text{c} \leq \varphi_\text{c}] =  \mathbb{P} [R_\text{c} \leq h \tan(\varphi_\text{c})] \\ \label{cdf-closest}
& = 1-e^{-\lambda \pi h^2 \tan^2(\varphi_\text{c})},
\end{align}
where the last equation comes from the fact that the null probability of the PPP in an area $\mathcal{C}$ is $\exp(-\lambda |\mathcal{C}|)$. Using \eqref{cdf-closest} we have
\begin{equation}\label{pdf-closest}
f_{\Phi_\text{c}}(\varphi_\text{c}) = \frac{\partial}{\partial \varphi_\text{c}} F_{\Phi_\text{c}}(\varphi_\text{c}) = 2\pi \lambda h^2 \frac{\sin(\varphi_\text{c})}{\cos^3(\varphi_\text{c})} e^{-\lambda \pi h^2 \tan^2(\varphi_\text{c})}.
\end{equation}
On the other hand, one obtains
\begin{subequations}
\begin{align} \nonumber
\mathbb{P}&[\mathsf{SINR}>\text{T}|\Phi_\text{c} = \varphi_\text{c}] \\ \nonumber 
& = \mathbb{P}[\mathsf{SINR}_\text{LoS}>\text{T}|\Phi_\text{c} = \varphi_\text{c}]\cdot\plos(\varphi_\text{c}) \\ \nonumber 
& + \mathbb{P}[\mathsf{SINR}_\text{NLoS}>\text{T}|\Phi_\text{c} = \varphi_\text{c}]\cdot\pnlos(\varphi_\text{c}) \\ \nonumber
& = \mathbb{P}\left[\slos<\frac{\Pt}{[N_0 + \mIagg(\varphi_\text{c})]\Lf(\varphi_\text{c})\text{T}}\right]\cdot\plos(\varphi_\text{c}) \\ \label{Pr-sir1}
& + \mathbb{P}\left[\snlos<\frac{\Pt}{[N_0 + \mIagg(\varphi_\text{c})]\Lf(\varphi_\text{c})\text{T}}\right]\cdot\pnlos(\varphi_\text{c}) \\ \nonumber
& = Q\left(\frac{\mlos-\psi(\varphi_\text{c})}{\sigma_\text{LoS}(\varphi_\text{c})}\right)\beta_1\left(\frac{5\pi}{12}-\vc \right)^{\beta_2} \\ \label{Pr-sir2}
& + Q\left(\frac{\mnlos-\psi(\varphi_\text{c})}{\sigma_\text{NLoS}(\varphi_\text{c})}\right)\left(1-\beta_1\left(\frac{5\pi}{12}-\vc \right)^{\beta_2}\right).
\end{align}
\end{subequations}
Above, in \eqref{Pr-sir1} the equation in \eqref{SINR} is used, \eqref{Pr-sir2} follows from \eqref{psi_distr} and \eqref{Plos}, and $\psi(\varphi_\text{c})$ is stated in \eqref{psi}. Finally by using \eqref{psi}, \eqref{pcov-calculation}, \eqref{pdf-closest}, \eqref{Pr-sir2} and Lemma \ref{mIagg-lemma} the desired result is attained.
\end{proof}
According to Theorem \ref{pcov-theorem}, we observe first that $\psi(\vc)$ is a decreasing function of $\lambda$ and $h$ since $\mIagg(\vc)$ and $\Lf$ are respectively increasing function of $\lambda$ and $h$ in \eqref{psi}. Since the Q--function is a decreasing function, and the exponential term in \eqref{pcov-integral} decreases with both $\lambda$ and $h$, the function $\Upsilon_{\!\mathrm{cov}}$  decreases with an increase in both $\lambda$ and $h$. In contrary, the first term in \eqref{pcov-final} increases with the increase in $\lambda$ and $h$. Therefore, these two opposite contributors finally suggest the existence of an optimum $\lambda$ and $h$ for maximum coverage probability which are numerically observed in the next section. 

To investigate the impact of $\fA$ on $\pcov$, and since the coverage probability is directly influenced by the LoS probability, we derive the average LoS probability of a closest drone $\bar{\mathcal{P}}_\text{LoS}^{\hspace{1pt}\text{c}}$ in the following proposition.

\begin{proposition} \label{PcovApp}
The average LoS probability of the closest drone $\bar{\mathcal{P}}_\text{LoS}^{\hspace{1pt}\text{c}}$  is given by
\begin{equation} \label{PlosC-proposition}
\bar{\mathcal{P}}_\text{LoS}^{\hspace{1pt}\text{c}} (h,\fA,\lambda) = 2\pi \lambda h^2 \cdot \Upsilon_{\!\mathrm{LoS}} (h,\fA,\lambda),
\end{equation}
where 
\begin{equation} \label{UpsilonLoS}
\Upsilon_{\!\mathrm{LoS}} = \int_0^{\frac{\fA}{2}} \frac{\sin(\varphi)}{\cos^3(\varphi)} e^{-\lambda \pi h^2 \tan^2(\varphi)} \beta_1  \left(\frac{5\pi}{12}-\varphi \right)^{\beta_2} \!\! d\varphi
\end{equation}
\end{proposition}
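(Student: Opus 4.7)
The plan is to recognize that $\bar{\mathcal{P}}_\text{LoS}^{\hspace{1pt}\text{c}}$ is simply the expectation of $\plos(\Phi_\text{c})$ with respect to the distribution of $\Phi_\text{c}$, where the distribution of $\Phi_\text{c}$ has already been worked out in the proof of Theorem~\ref{pcov-theorem}. The calculation is therefore a direct reuse of two ingredients that are already in the paper, namely the pdf $f_{\Phi_\text{c}}$ in \eqref{pdf-closest} and the LoS-probability expression $\plos(\varphi) = \beta_1\bigl(\tfrac{5\pi}{12}-\varphi\bigr)^{\beta_2}$ from \eqref{Plos}.

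Concretely, I would proceed in three short steps. First, write
\begin{equation*}
\bar{\mathcal{P}}_\text{LoS}^{\hspace{1pt}\text{c}}(h,\fA,\lambda) = \int_{0}^{\fA/2} \plos(\varphi_\text{c}) \, f_{\Phi_\text{c}}(\varphi_\text{c}) \, d\varphi_\text{c},
\end{equation*}
where the upper limit $\fA/2$ reflects the fact that by the association rule in Section~\ref{drone_asso} only drones whose main lobe covers the UE, i.e.\ those inside $\A$, are considered candidates for the serving link. Second, substitute the expression for $f_{\Phi_\text{c}}$ obtained in \eqref{pdf-closest} via the void probability of the PPP on the disk of radius $h\tan(\varphi_\text{c})$, and substitute $\plos(\varphi_\text{c})$ from \eqref{Plos}. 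Third, pull the constant factor $2\pi\lambda h^2$ out of the integrand to recover the structure \eqref{PlosC-proposition}--\eqref{UpsilonLoS}.

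I do not expect a genuine obstacle, since the derivation is essentially a one-line application of the law of total expectation using material already established in the proof of Theorem~\ref{pcov-theorem}. The only modest subtlety worth flagging is the interpretation of $\bar{\mathcal{P}}_\text{LoS}^{\hspace{1pt}\text{c}}$ when $\A$ happens to be empty: since $F_{\Phi_\text{c}}(\fA/2) = 1 - e^{-\lambda\pi h^2 \tan^2(\fA/2)} < 1$, the formula should be read as the joint probability that a closest drone exists inside $\A$ \emph{and} that the link to it is LoS, consistent with the convention already used in \eqref{pcov-final}. No reordering of integrals, no PGFL of the PPP, and no Campbell-type argument (as was required for Lemma~\ref{mIagg-lemma}) is needed here; the contact-distribution approach that yielded $f_{\Phi_\text{c}}$ does all the work.
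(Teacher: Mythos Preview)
Your proposal is correct and matches the paper's own proof essentially line for line: the paper also writes $\bar{\mathcal{P}}_\text{LoS}^{\hspace{1pt}\text{c}} = \int_0^{\fA/2} \plos(\varphi)\, f_{\Phi_\text{c}}(\varphi)\, d\varphi$ and then substitutes \eqref{pdf-closest} and \eqref{Plos}. Your remark about the defective-distribution interpretation when $\A$ is empty is a useful clarification that the paper leaves implicit.
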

\begin{proof}
The proof is given in Appendix \ref{PcovApp_proof}.
\end{proof}

In the next section, we numerically show that $\bar{\mathcal{P}}_\text{LoS}^{\hspace{1pt}\text{c}}$ given in Proposition \ref{PcovApp} is a good approximation of $\pcov$ expressed in Theorem \ref{pcov-theorem} at low altitudes. Note that the coverage probability expression in Theorem \ref{pcov-theorem} includes two integrals ($\mIagg$ in \eqref{psi} obtained from Lemma \ref{mIagg-lemma} includes one integral), however its approximation $\bar{\mathcal{P}}_\text{LoS}^{\hspace{1pt}\text{c}}$ only requires one integral calculation.

\begin{corollary} \label{Plos-increasing}
The average LoS probability of the closest drone $\bar{\mathcal{P}}_\text{LoS}^{\hspace{1pt}\text{c}}$ is an increasing function of $\fA$.
\end{corollary}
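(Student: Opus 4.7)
The plan is to exploit the fact that $\fA$ appears in $\bar{\mathcal{P}}_\text{LoS}^{\hspace{1pt}\text{c}}$ solely as the upper limit of integration in $\Upsilon_{\!\mathrm{LoS}}$, since the prefactor $2\pi\lambda h^2$ in \eqref{PlosC-proposition} has no dependence on the beamwidth. Therefore the most direct route is to differentiate $\bar{\mathcal{P}}_\text{LoS}^{\hspace{1pt}\text{c}}$ with respect to $\fA$ using the Leibniz rule and check that the resulting expression is strictly positive throughout the physically meaningful range of beamwidths.

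Concretely, I would first write
\begin{equation*}
\frac{\partial \bar{\mathcal{P}}_\text{LoS}^{\hspace{1pt}\text{c}}}{\partial \fA}
= 2\pi\lambda h^2 \cdot \frac{1}{2}\,
\frac{\sin(\fA/2)}{\cos^3(\fA/2)}\,
e^{-\lambda\pi h^2 \tan^2(\fA/2)}\,
\beta_1\!\left(\tfrac{5\pi}{12}-\tfrac{\fA}{2}\right)^{\!\beta_2},
\end{equation*}
obtained by evaluating the integrand of \eqref{UpsilonLoS} at $\varphi = \fA/2$ and multiplying by the derivative $1/2$ of the upper limit. Then I would argue term-by-term that this derivative is positive. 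For any practical beamwidth $\fA \in (0,\pi)$ with $\fA/2 < \pi/2$, the factor $\sin(\fA/2)/\cos^3(\fA/2)$ is strictly positive; the exponential is obviously positive; the constants $\lambda, h^2, \beta_1$ are positive model parameters; and the LoS-probability factor $(5\pi/12 - \fA/2)^{\beta_2}$ is positive as long as $\fA < 5\pi/6$, which is precisely the regime in which the LoS probability model \eqref{Plos} is physically meaningful (beyond that value $\plos$ itself would become negative and the model no longer applies).

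Because each factor is strictly positive on the admissible domain of $\fA$, the derivative is strictly positive, which establishes monotonicity. I do not anticipate any genuine obstacle: the only care point is making explicit that the statement is implicitly restricted to the range of $\fA$ in which the LoS probability model \eqref{Plos} remains valid, and noting that this restriction is what guarantees the sign of $(5\pi/12-\fA/2)^{\beta_2}$. The argument is essentially a one-line application of the fundamental theorem of calculus combined with a positivity check on the integrand at the moving endpoint.
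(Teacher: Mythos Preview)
Your argument is correct and matches the paper's approach: both observe that $\fA$ enters only as the upper limit of the integral in \eqref{UpsilonLoS} and that the integrand is positive, so the integral (and hence $\bar{\mathcal{P}}_\text{LoS}^{\hspace{1pt}\text{c}}$) increases with $\fA$. Your Leibniz-rule formulation and explicit domain restriction $\fA<5\pi/6$ are a slightly more detailed packaging of the same one-line positivity observation the paper gives.
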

\begin{proof}
The term inside of the integral in \eqref{UpsilonLoS} is positive. Therefore, increasing $\fA$ leads to a bigger $\Upsilon_{\!\mathrm{LoS}}$ and hence a higher $\bar{\mathcal{P}}_\text{LoS}^{\hspace{1pt}\text{c}}$ in \eqref{PlosC-proposition}. 
\end{proof}

Corollary \ref{Plos-increasing} provides an insight into the existence of an optimum value for $\fA$. In fact, $\bar{\mathcal{P}}_\text{LoS}^{\hspace{1pt}\text{c}}$ and hence the coverage probability increases with an increase in $\fA$, however using Lemma \ref{mIagg-lemma} the interference also increases with $\fA$, reducing the coverage probability. Therefore, at some $\fA$ one expects that the influence of these two contributors are balanced and the coverage probability reaches its maximum. 

\section{Numerical Results}
This section provides numerical results for the considered downlink  system, in which the optimal configuration of $h$, $\lambda$ and $\fA$ is found. Unless otherwise stated, the following system parameters are used in the numerical results: \mbox{$\Pt = -6$ dB}, \mbox{$N_0 = -150$ dB}, $\lambda = 5\times 10^{-6}$, $\fA = 90^{\circ}$, $h=500$ m, T = -5 dB, \mbox{$f = 2$ GHz}, and Urban environment.

\begin{figure}  
\centering
\includegraphics[width=\linewidth]{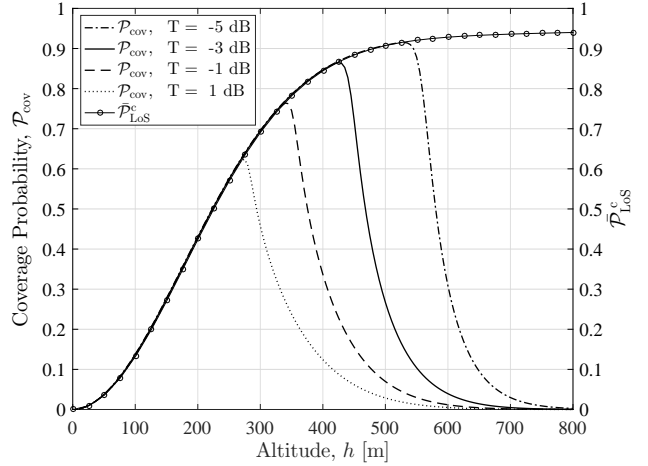}
   \caption{The coverage probability $\pcov$ is well approximated by the average LoS probability of the closest drone $\bar{\mathcal{P}}_\text{LoS}^{\hspace{1pt}\text{c}}$ for low altitudes. Higher altitudes translate into longer link lengths but fixed mean aggregate interference, so reaching the $\sinr$ threshold becomes more difficult and the two curves deviate from each other.} \label{Pcov_h_T}
\end{figure} 

First, we study the coverage probability performance versus the altitude for different $\sinr$ thresholds in Figure \ref{Pcov_h_T}. Moreover, we study the behavior of the average LoS probability of the closest drone $\bar{\mathcal{P}}_\text{LoS}^{\hspace{1pt}\text{c}}$ and compare it with the coverage probability performance. We observe that  $\bar{\mathcal{P}}_\text{LoS}^{\hspace{1pt}\text{c}}$ is a good approximation of  $\pcov$ at low altitudes. As the altitude increases so does the LoS probability of the closest drone, resulting in a better channel condition. However, the link length becomes longer while the aggregate interference remains the same, decreasing the $\sinr$. These two effects balance each other at an optimum altitude, which maximizes the value of $\pcov$. More restrictive values of the $\sinr$ threshold T translate into a faster decay of the coverage probability with altitude, limiting the range in which the approximation is valid.

\begin{figure}  
\centering
\includegraphics[width=\linewidth]{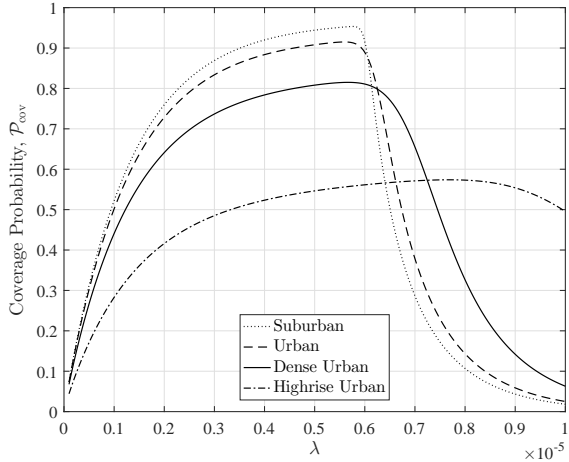}
   \caption{The coverage probability increases with $\lambda$ until the number of interferers becomes too large and the $\sinr$ decreases. This effect is lower in more crowded environments due to higher NLoS probability, resulting in lower aggregate interference.} \label{Pcov_lambda_env}
\end{figure} 

Figure \ref{Pcov_lambda_env} depicts the existence of the optimum density $\lambda$ that maximizes the coverage probability. When the density of nodes is low the probability of the closest drone being placed closer to the UE is also low, resulting in a worse $\sinr$ and therefore lower coverage probability. However, this probability quickly increases with $\lambda$. This behavior is maintained until the number of interferers becomes too large and the $\sinr$ starts to degrade. When comparing different environments, one can see that more crowded scenarios result in lower coverage probability due to lower LoS probability of the links. On the other hand, this effect is beneficial when the number of drones becomes large, as the received aggregate interference is also lower.

\begin{figure}  
\centering
\includegraphics[width=\linewidth]{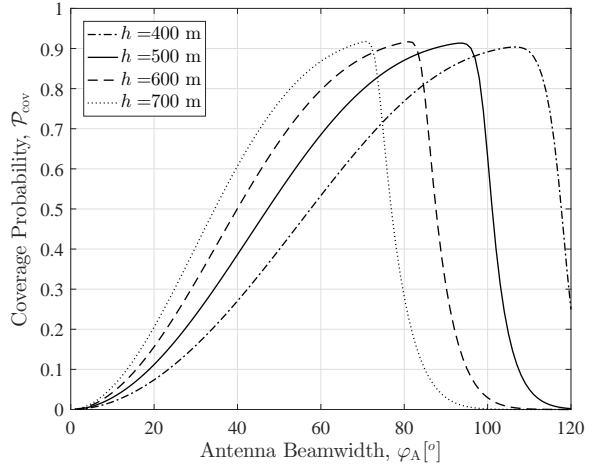}
   \caption{Increasing the antenna beamwidth increases the probability of the closest drone reaching the UE, but the same applies to the rest of interferers.} \label{Pcov_fA_h}
\end{figure} 

Finally, we investigate the impact of antenna beamwidth on the coverage probability for different altitude in Figure \ref{Pcov_fA_h}. When $\fA$ is narrow, the closest drone may not be able to reach the UE. As $\fA$ increases, the received power increases improving the coverage probability. With the further increase of $\fA$ the aggregate interference increases since all drones have the same beamwidth, causing a degradation of  the coverage probability. This behavior of the coverage probability shows the existence of an optimum antenna beamwidth $\fA$ for different altitudes.  

\section{Conclusion}

This paper considers a Poisson field of drone base stations that employ identical directional antennas. By associating the closest drone to a ground UE and considering the other drones as interferers, we derived the coverage probability of the corresponding link. Our analysis showed that the system parameters including the density of the drones, the altitude of the flying base stations and the beamwidth of their antennas influence the coverage probability significantly. Moreover, the analysis and numerical results proved that the coverage probability of such system can be well approximated by the average LoS probability of the closest drone at low altitudes up to 500 m. This approximation simplifies the large expression for the coverage probability and gives us an insight into the critical trading-off points of the system. In the future, we will consider the drone base stations co-existing with terrestrial networks and derive the coverage probability and optimum system parameters for such hybrid networks. Moreover, we will generalize the results for different network topologies using the stochastic tools.

\begin{appendices}

\section{Proof of Lemma \ref{mIagg-lemma}} \label{mIagg_proof}

To characterize the mean aggregate interference we notice that given the location of the closest drone at $\Phi_\text{c} = \varphi_\text{c}$ over $\A$ the interfering drones are distributed according to PPP of the fixed density $\lambda$ over a sub-region $\Ac$ characterized by $\vc \leq \varphi \leq \frac{\fA}{2}$ \cite{andrews2011tractable,naghshin2017coverage}. Therefore, by assuming that $N$, $\Phi$ and $\Psi$ are random variables representing the number, location and channel statistic of the interferers, one sees
\begin{align} \nonumber
\mIagg(\varphi_\text{c}) &= \E[\Iagg(\vc)] = \E_{N,\Phi,\Psi}\left[\sum_{i=1}^{N}\pri \right] \\ \nonumber
 &= \E_{N}\left[\E_{\Phi,\Psi}\left(\sum_{i=1}^{n}\pri \Big{|} N = n\right) \right] \\ \label{mIagg}
 &= \E_{N}\left[\sum_{i=1}^{n} \E_{\Phi,\Psi}[\pri] \Big{|} N = n \right].
\end{align}
The received interfering powers $\pri$ are identical and assumed to be independent random variables (i.i.d.) 
with a same distribution denoted as $\pr$. On the other hand, from the property of PPP $N$ follows a Poisson distribution of the mean value $\lambda |\Ac|$. Therefore \eqref{mIagg} can be re-written as
\begin{align}\label{mIagg2}
\mIagg(\vc) &= \E_{N}\left[n~\E_{\Phi,\Psi}[\pr] \right] 
  = \lambda |\Ac| \cdot \E_{\Phi,\Psi}[\pr].
\end{align}
To compute $\E_{\Phi,\Psi}[\pr]$, first we evaluate the pdf of $\Phi$ denoted as $g_\Phi(\varphi)$. To this end we note that
\begin{equation} \nonumber
G_\Phi(\varphi) \triangleq \mathbb{P}[\Phi \leq \varphi] = \frac{\pi (r^2-r_\text{c}^2)}{|\Ac|};~~~\vc<\varphi<\frac{\fA}{2}
\end{equation}
where $r_\text{c}$ corresponds to the closest drone and $r = h\tan(\varphi)$. Thus
\begin{equation} \label{gPhi}
g_\Phi(\varphi) = \frac{\partial}{\partial \varphi} G_\Phi(\varphi) = \frac{\partial r}{\partial \varphi} \cdot \frac{\partial G_\Phi(\varphi)}{\partial r}  = \frac{2\pi h^2}{|\Ac|} \frac{\sin(\varphi)}{\cos^3(\varphi)}.
\end{equation} 
Using \eqref{gPhi} one can write
\begin{align} \nonumber
\E_{\Phi,\Psi}[\pr] = \int_{\vc}^{\frac{\fA}{2}} \E_{\Psi}[\pr|\Phi = \varphi]~ g_\Phi(\varphi)~ d\varphi \\ \label{Ephi_psi}
= \frac{2\pi h^2}{|\Ac|}\int_{\vc}^{\frac{\fA}{2}} \frac{\sin(\varphi)}{\cos^3(\varphi)}\E_{\Psi}[\pr|\Phi = \varphi]~d\varphi,
\end{align}
where
\begin{align} \nonumber
\E_{\Psi}[\pr&|\Phi = \varphi] = \E_{\Psi}[\pr |\Phi = \varphi, \text{LoS}]\cdot\plos(\varphi) \\ \label{Epsi}
 &+ \E_{\Psi}[\pr |\Phi = \varphi, \text{NLoS}]\cdot\pnlos(\varphi).
\end{align}
Now we have
\begin{align}  \nonumber
\E_{\Psi}[\pr &|\Phi = \varphi, \text{LoS}] = \E_{\Psi}\left[\frac{\Pt}{\Lf \slos}\right] \\ \label{EpsiLoS}
&= \frac{\Pt}{\Lf} \E_{\Psi}\left[\frac{1}{\slos}\right] =  \frac{\Pt}{\Lf} 10^{\frac{-\mlos+v\vlos/2}{10}},
\end{align}
where the last equation follows from the fact that using \eqref{psi_distr} $1/\slos$ adopts a log-normal distribution as
\begin{equation} \nonumber
\ln \left[\frac{1}{\slos}\right] \sim \mathcal{N}\left(-v\mlos,v^2\vlos\right);~~~v=\frac{\ln(10)}{10}.
\end{equation}
Similarly we have
\begin{align} \label{EpsiNLoS}
\E_{\Psi}[\pr &|\Phi = \varphi, \text{NLoS}] = \frac{\Pt}{\Lf} 10^{\frac{-\mnlos+v\vnlos/2}{10}}.
\end{align}
Finally using \eqref{Lf}, \eqref{Plos}, \eqref{mIagg2}, \eqref{Ephi_psi}--\eqref{EpsiNLoS} the desired result is obtained.

\section{Proof of Proposition \ref{PcovApp}} \label{PcovApp_proof}	

To obtain the average LoS probability of the closest drone $\bar{\mathcal{P}}_\text{LoS}^{\hspace{1pt}\text{c}}$ one can write
\begin{equation} \label{plosc-derivation}
\bar{\mathcal{P}}_\text{LoS}^{\hspace{1pt}\text{c}} = \int_0^{\frac{\fA}{2}} \plos(\varphi) f_{\Phi_\text{c}}(\varphi) ~d\varphi 
\end{equation}
where $f_{\Phi_\text{c}}(\varphi)$ is the pdf of the closest drone's location over $\A$ obtained in \eqref{pdf-closest} and $\plos$ is expressed in \eqref{Plos}. Therefore by using these equations in \eqref{plosc-derivation} the desired result is attained.

\end{appendices}

\vspace{10mm}

\bibliographystyle{IEEEtran}

\bibliography{IEEEabrv,Refs}

\vfill

\end{document}